
\documentclass{amsart}


\usepackage{amsfonts}

\usepackage{url}




\providecommand{\middle}{\bigg}

\theoremstyle{plain}
\newtheorem{Thm}{Theorem}

\newtheorem{Lem}{Lemma}
\theoremstyle{definition}
\newtheorem{Rem}{Remark}

\renewcommand{\Re}{\operatorname{Re}}

\allowdisplaybreaks[3]

\begin{document}

\title{A $q$-linear analogue of the plane wave expansion}

\author[L. D. Abreu]{Lu\'{\i}s Daniel Abreu}
\address{CMUC, Departamento de Matem\'atica, Universidade de Coimbra,
Faculdade de Ci\^{e}ncias e Tecnologia (FCTUC), 3001-454 Coimbra, Portugal}
\curraddr{Austrian Academy of Sciences, Acoustic Research Institute, 
Reichsratsstrasse 17, A-1010 Vienna, Austria}
\email{daniel@mat.uc.pt}
\thanks{Research of the first author supported by CMUC/FCT and FCT
post-doctoral grant SFRH/BPD/26078/2005, POCI 2010 and FSE}
\author[\'O. Ciaurri]{\'Oscar Ciaurri}
\address{Departamento de Matem\'aticas y Computaci\'on, Universidad de La
Rioja, 26004 Logro\~no, Spain}
\email{oscar.ciaurri@unirioja.es}
\author[J. L. Varona]{Juan Luis Varona}
\address{Departamento de Matem\'aticas y Computaci\'on, Universidad de La
Rioja, 26004 Logro\~no, Spain}
\email{jvarona@unirioja.es}
\urladdr{http://www.unirioja.es/cu/jvarona/}

\renewcommand{\datename}{\textbf{TO APPEAR IN}:}
\date{\textit{Adv.\@ in Appl.\@ Math.\@} (accepted in November 2012)}

\thanks{Research of the second and third authors supported by grant
MTM2012-36732-C03-02 of the DGI}

\keywords{Bilinear expansion, biorthogonal expansion, plane wave expansion,
sampling theorem, Fourier-Neumann expansion, Dunkl transform, special
functions, $q$-special functions.}

\subjclass[2000]{Primary 94A20; Secondary 42A38, 42C10, 33D45}

\begin{abstract}
We obtain a $q$-linear analogue of Gegenbauer's expansion of the
plane wave. It is expanded in terms of the little $q$-Gegenbauer
polynomials and the \textit{third} Jackson $q$-Bessel function.
The result is obtained by using a method based on bilinear
biorthogonal expansions.
\end{abstract}

\maketitle

\section{Introduction}

Let $\beta >-1/2$. Gegenbauer's expansion of the plane wave in
Gegenbauer polynomials and Bessel functions is
\begin{equation}
e^{ixt} = \Gamma (\beta )\left( \frac{x}{2}\right) ^{-\beta }
\sum_{n=0}^{\infty }i^{n}(\beta +n)
J_{\beta +n}(x)C_{n}^{\beta}(t),
\quad t\in [-1,1]  \label{eq:geg}
\end{equation}
(see \cite[Ch.~11, \S\,5, formula (2)]{Wat}). In \cite[formula
(3.32)]{IZ}, Ismail and Zhang have discovered a basic analogue
of~\eqref{eq:geg} on $q$-quadratic grids, expanding the
$q$-quadratic exponential function (a solution of a first order
equation involving the so-called Askey-Wilson operator) in terms
of \emph{second} Jackson $q$-Bessel functions, $J_{\nu}^{(2)}(z;q)$,
and the continuous $q$-Gegenbauer polynomials (moreover, this has
been later extended to continuous $q$-Jacobi polynomials
in~\cite{IsmRZ}). Their $q$-quadratic exponential inherits the
orthogonality of the continuous $q$-Gegenbauer polynomials, and
leads to a theory of Fourier series on $q$-quadratic grids
(see~\cite{BS} and~\cite{Sus}). Since then, it has been a folk open
question to find a $q$-analogue of~\eqref{eq:geg} involving
discrete $q$-Gegenbauer polynomials and $q$-Bessel functions of a
different type. In this note, and also for $\beta >-1/2$, we will
obtain the following $q$-analogue of~\eqref{eq:geg}:
\begin{equation}
e(itx;q^{2}) =
\frac{(q^{2};q^{2})_{\infty }}{(q^{2\beta};q^{2})_{\infty }}
\, x^{-\beta } \sum_{n=0}^{\infty } i^{n}
q^{-[\frac{n+1}{2}](\beta-\frac{1}{2})} (1-q^{2\beta+2n })
J_{\beta+n}^{(3)}(xq^{[\frac{n+1}{2}]};q^{2})
C_{n}^{\beta }(t;q^{2}), \label{eq:qlinearplanewave}
\end{equation}
for $t\in [-1,1]$. Here, $[r]$ denotes the biggest integer less or
equal than~$r$, $J_{\nu }^{(3)}(z;q)$ is the \emph{third} Jackson
$q$-Bessel function, and $C_{n}^{\beta }(t;q^{2})$ are
$q$-analogues of the Gegenbauer polynomials, defined in terms of
the little $q$-Jacobi polynomials (see definitions of all these
functions in the third section of the paper). In fact we prove a
more general formula than~\eqref{eq:qlinearplanewave},
see Theorem~\ref{thm:qDunkl}.

The $q$-exponential function in~\eqref{eq:qlinearplanewave} is
the one introduced in~\cite{Rubin}:
\begin{equation*}
e(z;q^{2}) = \cos (-iz;q^{2}) + i\sin (-iz;q^2),
\end{equation*}
where
\begin{equation*}
\cos (z;q^{2}) =
\frac{(q^{2};q^{2})_{\infty }}{(q;q^{2})_{\infty }}
z^{\frac{1}{2}} J_{-\frac{1}{2}}^{(3)}(z;q^{2})
\quad\text{and}\quad
\sin (z;q^{2}) =
\frac{(q^{2};q^{2})_{\infty }}{(q;q^{2})_{\infty }}
z^{\frac{1}{2}} J_{\frac{1}{2}}^{(3)}(z;q^{2}).
\end{equation*}
The expansion \eqref{eq:qlinearplanewave} is obtained as a special case of a
more general formula, which is a $q$-analogue of the expansion of the Dunkl
kernel in terms of Bessel functions and generalized Gegenbauer polynomials
(see \cite{ACVExp} and \cite{Rosler}). The technique of proof is based on
the method of Bilinear Biorthogonal Expansions, developed in \cite{ACVExp}
and provides as a byproduct, $q$-analogues of Neumann series which are valid
for functions which belong to certain $q$-analogues of the Paley-Wiener
space. We remark that another $q$-linear analogue (but involving completely
different functions) of~\eqref{eq:geg} has been obtained in~\cite{ISem}.

The paper is organized as follows. We describe the setup of the method of
Bilinear Biorthogonal Expansions in the next section. In the third section
we collect some material on basic hypergeometric functions and
$q$-integration and apply it in the fourth section to the context of
the general set-up, yielding our main result. The last section
contains the evaluation of some $q$-integrals which are essential
in the proofs.

\section{The method of Bilinear Biorthogonal Expansions}
\label{sec:BBE}

We proceed to describe the set-up of the method of Bilinear Biorthogonal
Expansions~\cite{ACVExp}. The method aims to finding a bilinear expansion
for $K(x,t)$, a function of two variables defined on $\Omega \times \Omega
\subset \mathbb{R}\times \mathbb{R}$ and such that $K(x,t) = K(t,x)$
almost everywhere for $(x,t) \in \Omega \times \Omega$.
It consists of three ingredients:

\begin{enumerate}
\item[(i)] First define on $L^{2}(\Omega ,d\mu)$, with $d\mu$ a non-negative
real measure, an integral transformation $\mathcal{K}$ with inverse
$\widetilde{\mathcal{K}}$,
\begin{equation*}
(\mathcal{K}f)(t) = \int_{\Omega }f(x)\overline{K(x,t)}\,d\mu (x),
\qquad
(\widetilde{\mathcal{K}}g)(x) = \int_{\Omega }g(t)K(x,t)\,d\mu(t).
\end{equation*}
As usual, it is enough to suppose that the operators
$\mathcal{K}$ and $\widetilde{\mathcal{K}}$ are defined with these
formulas on a suitable dense subset of $L^2(\Omega,d\mu)$, and later
extended to the whole $L^2(\Omega,d\mu)$ in the standard way.
Let also note that, by Fubini's theorem, they satisfy
\begin{equation*}
\int_{\Omega }(\mathcal{K}f)g\,d\mu = \int_{\Omega }(\mathcal{K}g)f\,d\mu ,
\qquad
\int_{\Omega }(\widetilde{\mathcal{K}}f)g\,d\mu
= \int_{\Omega }(\widetilde{\mathcal{K}}g)f\,d\mu .
\end{equation*}

\item[(ii)] Let $I\subset \Omega $ be an interval such that, as a function of~$t$,
$K(x,\cdot )\in L^{2}(I,d\mu )$ and consider the subspace $\mathcal{P}$ of
$L^{2}(\Omega ,d\mu )$ constituted by those functions $f$ such that
$\mathcal{K}f$ vanishes outside of~$I$. This can also be written as
\begin{equation*}
\mathcal{P} = \Bigl\{f\in L^{2}(\Omega ) : 
f(x) = \int_{I}u(t)K(x,t)\,d\mu (t),\ u\in L^{2}(I,d\mu )\Bigr\}.
\end{equation*}

\item[(iii)] Finally, consider a pair of complete biorthonormal sequences of
functions in $L^{2}(I,d\mu )$, $\{P_{n}\}_{n\in N}$ and $\{Q_{n}\}_{n\in N}$
(with $N$ a subset of~$\mathbb{Z}$) and define, in $L^{2}(\Omega ,d\mu )$,
the sequences of functions $\{S_{n}\}_{n\in N}$ and $\{T_{n}\}_{n\in N}$
given by
\begin{equation*}
S_{n}(x)=\widetilde{\mathcal{K}}(\chi _{I}\overline{Q_{n}})(x),
\quad x\in \Omega ,
\qquad
T_{n}(x)=\overline{\mathcal{K}(\chi _{I}P_{n})(x)},
\quad x\in \Omega
\end{equation*}
(note that if $P_{n}=Q_{n}$ then $S_{n}=T_{n}$).
\end{enumerate}

Then, the following holds (see \cite[Theorem~1]{ACVExp}):

\begin{Thm}
\label{thm:expbilin} For each $x\in \Omega $, the following expansion
\footnote{%
The condition $t\in I$ in the identity~\eqref{eq:expbilin} is \textit{not} a
mistake. Although $K(x,t)$ is defined on $\Omega \times \Omega $, the
functions $P_{n}(t)$ are defined, in general, only on~$I$.} holds
in $L^{2}(I,d\mu )$:
\begin{equation}
K(x,t) = \sum_{n\in N}P_{n}(t)S_{n}(x),\quad t\in I.
\label{eq:expbilin}
\end{equation}
Moreover, $\{S_{n}\}_{n\in N}$ and $\{T_{n}\}_{n\in N}$ are a pair of
complete biorthogonal sequences in~$\mathcal{P}$, in such a way that every
$f\in \mathcal{P}$ can be written as
\begin{equation*}
f(x) = \sum_{n\in N}c_{n}(f)S_{n}(x),\quad x\in \Omega ,
\end{equation*}
with
\begin{equation*}
c_{n}(f) = \int_{\Omega }f(t)\overline{T_{n}(t)}\,d\mu (t).
\end{equation*}
The convergence is uniform in every set where
$\left\Vert K(x,\cdot)\right\Vert _{L^{2}(I,d\mu )}$ is bounded.
\end{Thm}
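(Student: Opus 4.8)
The plan is to derive the bilinear expansion~\eqref{eq:expbilin} by expanding, for each fixed $x\in\Omega$, the function $t\mapsto K(x,t)$ in the complete biorthonormal system $\{P_n\}_{n\in N}$ of $L^2(I,d\mu)$. Since by hypothesis (ii) we have $K(x,\cdot)\in L^2(I,d\mu)$, completeness gives us the $L^2(I,d\mu)$-convergent expansion $K(x,t)=\sum_{n\in N}a_n(x)P_n(t)$ with coefficients $a_n(x)=\int_I K(x,t)\overline{Q_n(t)}\,d\mu(t)$, using that $\{P_n\}$ and $\{Q_n\}$ are biorthonormal. The crucial observation is then that this coefficient is exactly $S_n(x)$: indeed
\begin{equation*}
\int_I K(x,t)\overline{Q_n(t)}\,d\mu(t)
=\int_\Omega \chi_I(t)\overline{Q_n(t)}\,K(x,t)\,d\mu(t)
=\widetilde{\mathcal{K}}(\chi_I\overline{Q_n})(x)=S_n(x),
\end{equation*}
where I have used the definition of $\widetilde{\mathcal{K}}$ from ingredient~(i) and the definition of $S_n$ from ingredient~(iii). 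This establishes~\eqref{eq:expbilin}.

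The second part is to show that $\{S_n\}$ and $\{T_n\}$ are biorthogonal in $\mathcal{P}$ and that the stated reconstruction formula holds. First I would verify biorthogonality: using the definitions and the duality relation $\int_\Omega(\widetilde{\mathcal{K}}f)g\,d\mu=\int_\Omega(\mathcal{K}g)f\,d\mu$ (which follows from Fubini together with the symmetry $K(x,t)=K(t,x)$, as recorded in ingredient~(i)), one computes
\begin{equation*}
\int_\Omega S_n(x)\overline{T_m(x)}\,d\mu(x)
=\int_\Omega \widetilde{\mathcal{K}}(\chi_I\overline{Q_n})(x)\,\mathcal{K}(\chi_I P_m)(x)\,d\mu(x)
=\int_I P_m(t)\overline{Q_n(t)}\,d\mu(t)=\delta_{n,m},
\end{equation*}
the last equality being the biorthonormality of $\{P_n\},\{Q_n\}$. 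For the reconstruction formula, take $f\in\mathcal{P}$, so $f=\widetilde{\mathcal{K}}(\chi_I u)$ for some $u\in L^2(I,d\mu)$; expanding $u=\sum_n\langle u,Q_n\rangle P_n$ in $L^2(I,d\mu)$ and applying the bounded operator $\widetilde{\mathcal{K}}$ term by term yields $f=\sum_n\langle u,Q_n\rangle S_n$, and a short computation identifies $\langle u,Q_n\rangle$ with $c_n(f)=\int_\Omega f(t)\overline{T_n(t)}\,d\mu(t)$, again via the duality relation. Completeness of $\{S_n\}$ in $\mathcal{P}$ is immediate from the completeness of $\{P_n\}$ in $L^2(I,d\mu)$ together with surjectivity of $\widetilde{\mathcal{K}}:\chi_I L^2(I,d\mu)\to\mathcal{P}$.

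The main obstacle, and the step requiring the most care, is the \emph{uniform convergence} claim: I would show that on any set where $\|K(x,\cdot)\|_{L^2(I,d\mu)}$ is bounded, the series converges uniformly. The natural route is to estimate the tail $\sum_{n\notin F}P_n(t)S_n(x)$ using the Cauchy--Schwarz inequality in $n$, bounding it by $\bigl(\sum_{n\notin F}|S_n(x)|^2\bigr)^{1/2}$ times a factor controlled by the $P_n$, and then to recognize $\sum_n|S_n(x)|^2$ as $\|K(x,\cdot)\|_{L^2(I,d\mu)}^2$ via Parseval applied to the expansion already established in the first part. The delicate point is handling the dependence on $t\in I$ and the possibly non-orthonormal (merely biorthonormal) nature of the system, so one must argue that the relevant partial-sum operators are uniformly bounded and invoke Parseval's identity for $K(x,\cdot)$ to pass from the $L^2$-norm bound to a uniform bound of the tail in $x$.
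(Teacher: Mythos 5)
The paper itself does not prove Theorem~\ref{thm:expbilin}; it quotes it from \cite[Theorem~1]{ACVExp}, so the comparison is with the argument given there. Your first step coincides with that argument and is correct: for fixed $x$, expand $K(x,\cdot)\in L^{2}(I,d\mu)$ in the system $\{P_n\}$ with dual $\{Q_n\}$ and identify the coefficient $\int_I K(x,t)\overline{Q_n(t)}\,d\mu(t)=\widetilde{\mathcal{K}}(\chi_I\overline{Q_n})(x)=S_n(x)$. Your biorthogonality computation for $\{S_n\},\{T_n\}$ also goes through, although the relation you invoke, $\int_\Omega(\widetilde{\mathcal{K}}f)g\,d\mu=\int_\Omega(\mathcal{K}g)f\,d\mu$, holds only for real kernels; what the computation actually uses is the paper's identity $\int_\Omega(\widetilde{\mathcal{K}}f)g\,d\mu=\int_\Omega(\widetilde{\mathcal{K}}g)f\,d\mu$ together with $\widetilde{\mathcal{K}}\mathcal{K}=\mathrm{id}$. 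However, your derivation of the expansion of $f\in\mathcal{P}$ contains a real error: expanding $u=\sum_n\langle u,Q_n\rangle P_n$ and applying $\widetilde{\mathcal{K}}$ term by term yields $f=\sum_n\langle u,Q_n\rangle\,\widetilde{\mathcal{K}}(\chi_I P_n)$, and $\widetilde{\mathcal{K}}(\chi_I P_n)$ is \emph{not} $S_n=\widetilde{\mathcal{K}}(\chi_I\overline{Q_n})$ (indeed $T_n=\widetilde{\mathcal{K}}(\chi_I\overline{P_n})$, so for real $P_n$ your term-by-term image is $T_n$). To land on the $S_n$ you must expand $u$ in the other direction, namely in the system $\{\overline{Q_n}\}$ whose biorthogonal partner is $\{\overline{P_n}\}$: $u=\sum_n\bigl(\int_I uP_n\,d\mu\bigr)\overline{Q_n}$, and then verify $\int_I uP_n\,d\mu=\int_\Omega f\,\overline{T_n}\,d\mu=c_n(f)$ using $\mathcal{K}f=\chi_I u$ and the duality relation. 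As written, your coefficients $\langle u,Q_n\rangle$ are not $c_n(f)$.

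The more serious gap is the uniform convergence claim. Your plan hinges on the identity $\sum_n|S_n(x)|^{2}=\|K(x,\cdot)\|_{L^{2}(I,d\mu)}^{2}$, i.e.\ Parseval for the coefficients $S_n(x)=\langle K(x,\cdot),Q_n\rangle$. Since $\{P_n\},\{Q_n\}$ are merely biorthonormal, not orthonormal, Parseval is simply unavailable: for a non-orthogonal system there is in general no comparison, in either direction, between $\sum_n|\langle g,Q_n\rangle|^{2}$ and $\|g\|^{2}$, and the companion factor in your Cauchy--Schwarz-in-$n$ bound (a sum of coefficients against the $P_n$, or of the $|c_n(f)|^{2}$) is equally uncontrolled. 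You flag this as ``the delicate point'' but leave it unresolved, so the last assertion of the theorem is not proved by your proposal. The repair is to avoid summation estimates altogether: writing $u_F=\sum_{n\in F}c_n(f)\overline{Q_n}$, one has
\begin{equation*}
f(x)-\sum_{n\in F}c_n(f)S_n(x)=\int_I\bigl(u(t)-u_F(t)\bigr)K(x,t)\,d\mu(t),
\end{equation*}
and Cauchy--Schwarz in $L^{2}(I,d\mu)$ bounds the left-hand side by $\|u-u_F\|_{L^{2}(I,d\mu)}\,\|K(x,\cdot)\|_{L^{2}(I,d\mu)}$. The first factor is the tail of a norm-convergent expansion in $L^{2}(I,d\mu)$ and tends to $0$ independently of $x$, while the second is bounded on the sets described in the statement; this yields the uniform convergence with no orthogonality needed, and the same device handles the kernel expansion~\eqref{eq:expbilin}.
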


\section{Preliminaries on $q$-special functions}

We follow the standard notations (see~\cite{GR}
and~\cite{KoeLeSw}). Choose a number $q$ such that $0<q<1$. The
notational conventions
\begin{gather*}
(a;q)_{0}=1,\quad (a;q)_{n}=\prod_{k=1}^{n}(1-aq^{k-1}), \\
(a;q)_{\infty } = \lim\limits_{n\rightarrow \infty }(a;q)_{n},\quad
(a_{1},\dots ,a_{m};q)_{n}=\prod_{l=1}^{m}(a_{l};q)_{n},\quad |q|<1,
\end{gather*}
where $n=1,2,\dots $ will be used.
The symbol ${}_{s}\phi _{r}$ stands for the function
\begin{equation*}
{}_{s}\phi _{r}\left( \,
\begin{matrix}
a_{1},\dots ,a_{s} \\
b_{1},\dots ,b_{r}
\end{matrix}
\,\middle|\,q;z\,\right)
= \sum_{n=0}^{\infty }
\frac{ \left((-1)^n q^{n(n-1)/2}\right)^{r-s+1} (a_{1},\dots,a_{s};q)_{n}}
{(q,b_{1},\dots ,b_{r};q)_{n}}z^{n};
\end{equation*}
in particular,
\begin{equation*}
{}_{r+1}\phi _{r}\left( \,
\begin{matrix}
a_{1},\dots ,a_{r+1} \\
b_{1},\dots ,b_{r}
\end{matrix}
\,\middle|\,q;z\,\right)
= \sum_{n=0}^{\infty } \frac{(a_{1},\dots,a_{r+1};q)_{n}}
{(q,b_{1},\dots ,b_{r};q)_{n}}z^{n}.
\end{equation*}
We will also require the definition of the $q$-integral. The $q$-integral in
the interval $(0,a]$ is defined as
\begin{equation*}
\int_{0}^{a}f(t)\,d_{q}t = (1-q)a\sum_{n=0}^{\infty }f(aq^{n})q^{n}
\end{equation*}
and in the interval $(0,\infty )$ as
\begin{equation*}
\int_{0}^{\infty }f(t)\,d_{q}t
= (1-q)\sum_{n=-\infty }^{\infty } f(q^{n})q^{n},
\end{equation*}
provided that the infinite sums converge absolutely.
This can be extended to the whole real line in an obvious way.

The \emph{third Jackson $q$-Bessel function} $J_{\nu }^{(3)}$ is defined by
the power series
\begin{align*}
J_{\nu }^{(3)}(x;q)
&= \frac{(q^{\nu +1};q)_{\infty }}{(q;q)_{\infty }} \,x^{\nu}
\sum_{n=0}^{\infty } (-1) ^{n}
\frac{q^{\frac{n(n+1)}{2}}}{(q^{\nu +1};q)_{n}(q;q)_{n}} \,x^{2n}
\\
&= \frac{(q^{\nu +1};q)_{\infty }}{(q;q)_{\infty }} \,x^{\nu}
\,{}_{1}\phi _{1}\left( \,
\begin{matrix}
0 \\
q^{\nu+1}
\end{matrix}
\,\middle|\,q; qx^2 \,\right).
\end{align*}
Throughout this paper, when no confusion is possible, we will drop the
superscript and write simply
\begin{equation*}
J_{\nu }(x;q) = J_{\nu }^{(3)}(x;q).
\end{equation*}

For $x\in (0,1)$, the \emph{little $q$-Jacobi} polynomials are defined for
$\alpha,\beta >-1$ by
\begin{equation*}
p_{n}(x;q^{\alpha},q^{\beta};q) = {_{2}\phi_{1}} \left(\,
\begin{matrix}
q^{-n},q^{\alpha+\beta+n+1} \\
q^{\alpha+1}
\end{matrix}
\,\middle|\, q;qx\,\right).
\end{equation*}
They satisfy the following discrete orthogonality relation
(see~\cite[(14.12.2)]{KoeLeSw}):
\begin{multline*}  
\int_{0}^{1}\frac{(qx;q)_{\infty}} {(q^{\beta+1}x;q)_{\infty}} \,x^{\alpha}
p_{n}(x;q^{\alpha },q^{\beta};q) p_{m}(x;q^{\alpha},q^{\beta};q) \, d_{q}x \\
= \frac{(1-q)(1-q^{\alpha+\beta+1})}{1-q^{\alpha+\beta+2n+1}}
\frac{(q,q^{\alpha+\beta+2};q)_{\infty}}
{(q^{\alpha+1},q^{\beta+1};q)_{\infty}}
\frac{(q,q^{\beta+1};q)_{n}}
{(q^{\alpha+1},q^{\alpha+\beta+1};q)_{n}} \, q^{n(\alpha+1)}
\delta_{m,n}.
\end{multline*}
For our purposes we need to rewrite this orthogonality. We will use the
polynomials $p_{n}^{(\alpha,\beta)}$ normalized as follows:
\begin{equation}  \label{eq:pnlittlejacobi}
p_{n}^{(\alpha,\beta)}(x;q) = q^{-\frac{n(\alpha+1)}{2}}
\frac{(q^{\alpha+1};q)_{n}}{(q;q)_{n}}
\, p_{n}(x;q^{\alpha},q^{\beta};q).
\end{equation}
These polynomials satisfy
\begin{equation*}
\lim_{q\rightarrow 1} p_{n}^{(\alpha,\beta)}(x;q) 
= P_{n}^{(\alpha,\beta)}(1-2x),
\end{equation*}
where $P_{n}^{(\alpha,\beta)}$ are the classical Jacobi polynomials (see
\cite[p.~478]{Ism}). It will be convenient to replace $q$ by $q^{2}$ in the
above orthogonality. Then, from the definition of the $q$-integral we obtain
the identity
\begin{equation*}
\int_{0}^{1} f(x) \,d_{q^{2}}x = (1+q) \int_{0}^{1} xf(x^{2})
\,d_{q}x,
\end{equation*}
and use it in order to obtain the following:
\begin{multline}
\int_{0}^{1} \frac{(q^{2}x^{2};q^{2})_{\infty}}
{(q^{2\beta+2}x^{2};q^{2})_{\infty}}
p_{n}^{(\alpha,\beta)}(x^{2};q^{2})
p_{m}^{(\alpha,\beta)}(x^{2};q^{2}) x^{2\alpha+1} \, d_{q}x
\label{eq:ortqjacob} \\
= \frac{1-q}{1-q^{2\alpha+2\beta+4n+2}}
\frac{(q^{2n+2},q^{2\alpha+2\beta+2n+2};q^{2})_{\infty}}
{(q^{2\alpha+2n+2},q^{2\beta+2n+2};q^{2})_{\infty}} \,\delta_{m,n}.
\end{multline}

\section{Application of the method}

\subsection{The integral transform}

We will construct the integral transform required in the first ingredient
of our method. A generalized $q$-exponential kernel (in the spirit of the kernel
for the Dunkl transform) can be defined in terms of $q$-Bessel. Indeed, we
can consider the function
\begin{align}
E_{\alpha }(ix;q^{2})
&= \frac{(q^{2};q^{2})_{\infty }}{(q^{2\alpha+2};q^{2})_{\infty }}
\left( \frac{J_{\alpha }(x;q^{2})}{x^{\alpha }}
+ \frac{J_{\alpha +1}(x;q^{2})}{x^{\alpha +1}}\,xi\right)  \label{eq:qDunkl}
\\
&= {}_{1}\phi _{1}\left( \,
\begin{matrix}
0 \\
q^{2\alpha+2}
\end{matrix}
\,\middle|\,q^2; q^2x^2 \,\right)
+ \frac{ix}{1-q^{2\alpha+2}}
\,{}_{1}\phi _{1}\left( \,
\begin{matrix}
0 \\
q^{2\alpha+4}
\end{matrix}
\,\middle|\,q^2; q^2x^2 \,\right).
\notag
\end{align}
Taking the measure
\begin{equation*}
d\mu _{q,\alpha }(x)
= \frac{1}{2(1-q)}
\frac{(q^{2\alpha +2};q^{2})_{\infty }}{(q^{2};q^{2})_{\infty }}
|x|^{2\alpha +1} \, d_{q}x,
\end{equation*}
in a similar way to the Dunkl transform, first introduced by Dunkl
in \cite{Dunkl} (see also~\cite{Jeu} or~\cite{CV}), for $\alpha
\ge -1/2$ we can define the following $q$-integral transform:
\begin{equation}
\mathcal{F}_{\alpha ,q}f(y)
= \int_{-\infty }^{\infty } f(x)
E_{\alpha}(-iyx;q^{2}) \, d\mu _{q,\alpha }(x),
\quad y\in \{\pm q^{k}\}_{k\in \mathbb{Z}},  \label{eq:qDu-T}
\end{equation}
for $f\in L^{1}(\mathbb{R},d\mu _{q,\alpha })$. This $q$-integral
transform is related to the $q$-Dunkl type operator introduced in
\cite{BeBe} (for a different $q$-Dunkl type operator see
\cite{Fit}). The case $\alpha =-\frac{1}{2}$ provides a
$q$-analogue of the Fourier transformation. In this special case,
an inversion theory of this transform has been derived in~\cite{Rubin}
using the results of \cite{KoorSw, KoorSw-corr}. For even functions
this becomes a $q$-analogue of the Hankel transform
\begin{equation*}
H_{\alpha ,q}f(x)
= \int_{0}^{\infty }
\frac{J_{\alpha }(xy;q^{2})}{(xy)^{\alpha }}
\, f(y)\,d\omega _{q,\alpha }(y),\quad x>0,
\end{equation*}
where $d\omega _{q,\alpha }(y)=\frac{y^{2\alpha
+1}}{1-q}\,d_{q}y$. This is the transform studied by Koornwinder
and Swarttouw \cite{KoorSw, KoorSw-corr}, up to a small
modification. By the results in \cite{KoorSw, KoorSw-corr} we have
the inversion formula
\begin{equation*}
f(q^{n}) = H_{\alpha ,q}(H_{\alpha ,q}f)(q^{n}).
\end{equation*}
In particular, $H_{\alpha ,q}$ is an isometric transformation in
$L^{2}((0,\infty ),d\omega _{q,\alpha })$. For odd functions,
$\mathcal{F}_{\alpha,q}$ turns down to~$H_{\alpha+1,q}$.

For $\mathcal{F}_{\alpha ,q}$, by
combining the results for odd and even functions
(or using again the arguments in~\cite{KoorSw, KoorSw-corr}),
it is easy to check that
$\mathcal{F}_{\alpha ,q}^{-1}f(y) = \mathcal{F}_{\alpha ,q}f(-y)$.
Moreover, we have the formula
\begin{equation*}
\int_{-\infty }^{\infty } u(y)
\mathcal{F}_{\alpha ,q}v(y)\,d\mu _{q,\alpha}(y)
= \int_{-\infty }^{\infty }\mathcal{F}_{\alpha ,q}u(y)v(y)
\,d\mu_{q,\alpha }(y),
\end{equation*}
and $\mathcal{F}_{\alpha ,q}$ is an isometry on
$L^{2}(\mathbb{R},d\mu_{q,\alpha })$.
As in the case of the Dunkl transform, we can consider the
parameter $\alpha >-1$.

\subsection{The space $\mathcal{P}$}
\label{sub:calP}

The space $\mathcal{P}$ is the following $q$-analogue of the Paley-Wiener
space:
\begin{equation*}
PW_{\alpha ,q} = \Bigg\{f\in L^{2}(\mathbb{R},d\mu _{q,\alpha }):
f(t) = \int_{-1}^{1}u(x)E_{\alpha }(ixt;q^{2})\,d\mu _{q,\alpha }(x),
\; u\in L^{2}(\mathbb{R},d\mu _{q,\alpha })\Bigg\}.
\end{equation*}

\subsection{The biorthogonal functions}

Let us start defining the generalized little $q$-Gegenbauer polynomials
and, later, we will take the corresponding $q$-Fourier-Neumann type series.

\subsubsection{Generalized little $q$-Gegenbauer polynomials}

To construct the plane wave expansion for the kernel~\eqref{eq:qDunkl},
$q$-analogue of the Dunkl transform, we consider generalized little
$q$-Gegenbauer polynomials
\begin{align*}
C_{2n}^{(\beta +1/2,\alpha +1/2)}(t;q^{2})
&= (-1)^{n} \, \frac{(q^{2\alpha+2\beta +2};q^{2})_{n}}
{(q^{2\alpha +2};q^{2})_{n}}
\,p_{n}^{(\alpha ,\beta )}(t^{2};q^{2}), \\
C_{2n+1}^{(\beta +1/2,\alpha +1/2)}(t;q^{2})
&= (-1)^{n} \, \frac{(q^{2\alpha+2\beta +2};q^{2})_{n+1}}
{(q^{2\alpha +2};q^{2})_{n+1}}
\,t p_{n}^{(\alpha+1,\beta )}(t^{2};q^{2}),
\end{align*}
where the polynomials $p_{n}^{(\alpha ,\beta )}$ are defined
by~\eqref{eq:pnlittlejacobi} in terms of the little $q$-Jacobi polynomials.
Using~\eqref{eq:ortqjacob} we obtain
\begin{equation*}
\int_{-1}^{1} C_{k}^{(\beta +1/2,\alpha +1/2)}(t;q^{2})
C_{j}^{(\beta +1/2,\alpha +1/2)}(t;q^{2})
\frac{(t^{2}q^{2};q^{2})_{\infty }}{(t^{2}q^{2\beta
+2};q^{2})_{\infty }} \,d\mu _{q,\alpha }(t)
= h_{k,q}^{(\beta,\alpha)} \delta _{k,j},
\end{equation*}
where
\begin{align*}
h_{2n,q}^{(\beta ,\alpha )} &= \int_{-1}^{1}
\left[ C_{2n}^{(\beta+1/2,\alpha+1/2)}(t;q^{2})\right] ^{2}
\frac{(t^{2}q^{2};q^{2})_{\infty }}
{(t^{2}q^{2\beta +2};q^{2})_{\infty }}
\,d\mu _{q,\alpha }(t)
\\
&= \frac{1}{1-q^{2\alpha +2\beta+4n +2}}
\frac{(q^{2\alpha +2\beta+2};q^{2})_{n}} {(q^{2\alpha +2};q^{2})_{n}}
\frac{(q^{2n+2},q^{2\alpha+2\beta +2};q^{2})_{\infty }}
{(q^{2\beta +2n+2},q^2;q^{2})_{\infty }},
\end{align*}
\begin{align*}
h_{2n+1,q}^{(\beta ,\alpha )} &= \int_{-1}^{1} \left[
C_{2n+1}^{(\beta +1/2,\alpha +1/2)}(t;q^{2})\right] ^{2}
\frac{(t^{2}q^{2};q^{2})_{\infty }}
{(t^{2}q^{2\beta +2};q^{2})_{\infty }}
\,d\mu _{q,\alpha }(t)
\\
&= \frac{1}{1-q^{2\alpha +2\beta+4n +4}}
\frac{(q^{2\alpha+2\beta+2};q^{2})_{n+1}} {(q^{2\alpha +2};q^{2})_{n+1}}
\frac{(q^{2n+2},q^{2\alpha +2\beta +2};q^{2})_{\infty}}
{(q^{2\beta +2n+2},q^2;q^{2})_{\infty }}.
\end{align*}
We will also consider the little $q$-Gegenbauer polynomials defined as
\begin{equation}
C_{n}^{\beta }(t;q^{2}) = C_{n}^{(\beta ,0)}(t;q^{2})  \label{eq:qultra}
\end{equation}
(which can also be expressed in terms of big $q$-Jacobi
polynomials, as we can see in \cite[formulas (4.48) and (4.49)]{Koor}).

\subsubsection{$q$-Fourier-Neumann type series}

Now, given $\alpha >-1$, we define the $q$-Neumann functions by
\begin{equation*}
\mathcal{J}_{\alpha,n}(x;q^2)
= \frac{J_{\alpha+n+1} (xq^{[\frac{n+1}{2}]};q^{2})}{x^{\alpha+1}},
\end{equation*}
where $[\frac{n+1}{2}]$ denotes the biggest integer less or equal than
$\frac{n+1}{2}$. The identity
\begin{multline}  \label{eq:qweber2}
\int_{0}^{\infty} x^{-\lambda }
J_{\mu }(q^{m}x;q^{2}) J_{\nu}(q^{n}x;q^{2})
\, d_{q}x \\
= \begin{cases}
 (1-q)q^{n(\lambda -1)+(m-n)\mu }
\dfrac{(q^{1+\lambda+\nu-\mu},q^{2\mu +2};q^{2})_{\infty}}
{(q^{1-\lambda +\nu +\mu},q^{2};q^{2})_{\infty}} \\
\kern20pt\times {_{2}\phi_{1}} \left(\,
\begin{matrix}
q^{1-\lambda +\mu +\nu },q^{1-\lambda+\mu-\nu} \\
q^{2\mu +2}
\end{matrix}
\,\middle|\, q^{2};q^{2m-2n+1+\lambda+\nu-\mu} \,\right), \\
(1-q)q^{m(\lambda -1)+(n-m)\nu }
\dfrac{(q^{1+\lambda+\mu-\nu},q^{2\nu+2};q^{2})_{\infty}}
{(q^{1-\lambda +\mu +\nu},q^{2};q^{2})_{\infty}} \\
\kern20pt\times {_{2}\phi_{1}} \left(\,
\begin{matrix}
q^{1-\lambda +\nu +\mu },q^{1-\lambda+\nu-\mu} \\
q^{2\nu +2}
\end{matrix}
\,\middle|\, q^{2};q^{2n-2m+1+\lambda+\mu-\nu} \,\right),
\end{cases}
\end{multline}
was established in \cite{KoorSw, KoorSw-corr}, and it is valid for
$\Re\lambda <\Re(\mu +\nu +1)$, $m$ and $n$ integers. It can be
checked, by using Heine's transformation formula
\begin{equation*}
{_{2}\phi_{1}} \left(\,
\begin{matrix}
a,b \\
c
\end{matrix}
\,\middle|\, q;z\,\right) =
\frac{(b,az;q)_\infty}{(c,z,q)_{\infty}} \, {_{2}\phi_{1}}
\left(\,
\begin{matrix}
c/b,z \\
az
\end{matrix}
\,\middle|\, q;b\,\right),
\end{equation*}
that the expressions given on the right-hand side
of~\eqref{eq:qweber2} are equal; but there are some exceptional
cases in the previous identity (these exceptional cases were
overlooked in \cite{KoorSw} and they can be seen
in~\cite{KoorSw-corr}, which is a corrected version of the first
paper): the integral is only equal to the first part of the
right-hand side when $n-m+(1+\lambda+\mu-\nu)/2$ and
$(1-\lambda+\nu-\mu)/2$ are non-positive integers, and it is only
equal to the second part when $m-n+(1+\lambda+\nu-\mu)/2$ and
$(1-\lambda+\mu-\nu)/2$ are non-positive integers.

From~\eqref{eq:qweber2} we can state the following lemma.

\begin{Lem}
Let $\alpha>-1$. Then
\begin{equation*}
\int_{0}^{\infty} J_{\alpha+2n+1}(q^{n}x;q^{2})
J_{\alpha+2m+1}(q^{m}x;q^{2}) \, \frac{d_{q}x}{x}
= \frac{1-q}{1-q^{2\alpha+4m+2}} \,\delta_{n,m},
\end{equation*}
for $n,m=0,1,2,\dots$.
\end{Lem}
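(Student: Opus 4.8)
The plan is to derive this orthogonality relation as a direct specialization of the general $q$-Weber--Schafheitlin type integral \eqref{eq:qweber2}. I want to evaluate
\begin{equation*}
\int_{0}^{\infty} x^{-1}
J_{\alpha+2n+1}(q^{n}x;q^{2})\,
J_{\alpha+2m+1}(q^{m}x;q^{2})\,d_{q}x,
\end{equation*}
so first I would match parameters by setting $\lambda=1$, $\mu=\alpha+2n+1$, $\nu=\alpha+2m+1$, and identifying the integer shifts in the Bessel arguments with the exponents $m$ and $n$ appearing in \eqref{eq:qweber2} (in that formula's notation the "$m$" is my $n$ and the "$n$" is my $m$). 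I would check that the convergence condition $\Re\lambda<\Re(\mu+\nu+1)$ becomes $1<2\alpha+2n+2m+3$, which holds for all $\alpha>-1$ and $n,m\ge 0$.

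The key simplification is that with $\lambda=1$ the ${}_2\phi_1$ factors in \eqref{eq:qweber2} degenerate: one of the numerator parameters of each hypergeometric becomes $q^{1-\lambda+\mu-\nu}=q^{\mu-\nu}$ or $q^{1-\lambda+\nu-\mu}=q^{\nu-\mu}$. Since $\mu-\nu=2(n-m)$ is an \emph{even} integer, one of these parameters equals a non-positive power of $q^{2}$ precisely when $n\le m$ (or $m\le n$), so the series terminates and, in fact, the relevant ${}_2\phi_1$ collapses. I would next invoke the stated exceptional-case conditions to select the correct branch of the right-hand side: for $n<m$ one branch applies and for $m<n$ the other, and I expect the terminating ${}_2\phi_1$ to evaluate to $0$ in the strictly off-diagonal case (forcing the $\delta_{n,m}$), while on the diagonal $n=m$ both branches coincide and the hypergeometric reduces to its constant term $1$.

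The main obstacle, and the step demanding the most care, will be the bookkeeping around these exceptional cases: I must verify that the parity of $\mu-\nu$ guarantees the integer/non-positivity hypotheses are met on exactly the right side, so that the two candidate expressions do not conflict and the vanishing for $n\ne m$ is genuine rather than an artifact of using the wrong branch. Once the correct branch is fixed, the remaining computation is routine: substituting $\lambda=1$, $\mu=\nu=\alpha+2m+1$ into the prefactor $(1-q)q^{n(\lambda-1)+(m-n)\mu}$ kills the exponential factor (since $\lambda-1=0$ and $m-n=0$), and the ratio of infinite $q$-Pochhammer symbols $\dfrac{(q^{1+\lambda+\mu-\nu},q^{2\nu+2};q^{2})_{\infty}}{(q^{1-\lambda+\mu+\nu},q^{2};q^{2})_{\infty}}$ telescopes to $\dfrac{1}{1-q^{2\alpha+4m+2}}$, yielding the claimed value $\dfrac{1-q}{1-q^{2\alpha+4m+2}}$.
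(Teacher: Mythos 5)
Your route is the same as the paper's: specialize \eqref{eq:qweber2} at $\lambda=1$ with the orders $\alpha+2n+1$, $\alpha+2m+1$ matched to the scales $q^{n}$, $q^{m}$, evaluate the diagonal case directly, and show the off-diagonal integral vanishes. Your diagonal computation is correct and agrees with the paper's (with $\mu=\nu$ the upper parameter $q^{1-\lambda+\mu-\nu}=q^{0}=1$ kills every term of the ${}_{2}\phi_{1}$ past the constant term, and the Pochhammer ratio telescopes to $1/(1-q^{2\alpha+4m+2})$), as is the convergence check. The genuine gap is at the heart of the matter: for $n\neq m$ you only say that the terminating ${}_{2}\phi_{1}$ ``collapses'' and that you \emph{expect} it to evaluate to $0$. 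Termination by itself gives no vanishing; what is needed is the $q$-Chu--Vandermonde evaluation (formula (II.6) of \cite{GR}),
\begin{equation*}
{_{2}\phi_{1}} \left(\,
\begin{matrix}
a,q^{-k} \\
c
\end{matrix}
\,\middle|\, q;q\,\right) = \frac{(c/a;q)_{k}}{(c;q)_{k}}\,a^{k},
\end{equation*}
which is exactly what the paper invokes. Applied with base $q^{2}$, $k=|n-m|\geq 1$, $a=q^{2\alpha+2n+2m+2}$ and $c=q^{2\alpha+4\min(n,m)+4}$, it gives $c/a=(q^{2})^{-(k-1)}$, so $(c/a;q^{2})_{k}$ contains the factor $1-q^{0}=0$ and the ${}_{2}\phi_{1}$ vanishes; this is precisely what produces $\delta_{n,m}$. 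Until this (or an equivalent) evaluation is supplied, the orthogonality is asserted rather than proved.

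A secondary point: your reading of the exceptional-case conditions is inverted, though harmlessly so. With $\lambda=1$ and each order matched to its scale as above (so that, in the notation of \eqref{eq:qweber2}, $\mu=\alpha+2m+1$ and $\nu=\alpha+2n+1$), the first quantity in each condition equals $n-m+(1+\lambda+\mu-\nu)/2=1$ and $m-n+(1+\lambda+\nu-\mu)/2=1$, which are positive; hence neither exceptional case ever occurs, and \emph{both} parts of the right-hand side of \eqref{eq:qweber2} are valid for every pair $(n,m)$. Choosing the branch whose ${}_{2}\phi_{1}$ terminates is therefore a convenience, not something forced on you; in the non-terminating branch the vanishing for $n\neq m$ is visible from the prefactor instead, since $(q^{2+2n-2m};q^{2})_{\infty}=0$ when $m>n$. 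So the ``main obstacle'' you anticipated does not exist, while the one-line evaluation you postponed is the actual content of the proof.
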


\begin{proof}
It is easy to check from~\eqref{eq:qweber2} that in the case
$q^n=q^m$, $\lambda=1$, and $\mu=\nu$,
\begin{equation*}
\int_{0}^{\infty} (J_{\mu }(q^{m}x;q^{2}))^2
\, \frac{d_{q}x}{x} \\
= \frac{1-q}{1-q^{2\mu}}.
\end{equation*}
For the case $n\not=m$, by setting $\lambda =1$, $\nu =\alpha+2n+1$
and $\mu=\alpha+2m+1$ in~\eqref{eq:qweber2}, it is clear that
\begin{multline*}
\int_{0}^{\infty} J_{\alpha+2n+1}(q^{n}x;q^{2})
J_{\alpha+2m+1}(q^{m}x;q^{2}) \, \frac{d_{q}x}{x}
= (1-q)q^{(m-n)(\alpha+2n+1)} \\
\times \frac{(q^{2n-2m+2},q^{2\alpha+4m+4};q^2)_{\infty}}
{(q^{2\alpha+2n+2m+2},q^2;q^2)}
\, {_{2}\phi_{1}} \left(\,
\begin{matrix}
q^{2\alpha+2n+2m+2},q^{2m-2n} \\
q^{2\alpha+4m+4}
\end{matrix}
\,\middle|\, q^2;q^2\,\right).
\end{multline*}
Then, by using the identity (see~\cite[formula (II.6)]{GR})
\begin{equation*}
{_{2}\phi_{1}} \left(\,
\begin{matrix}
a,q^{-n} \\
c
\end{matrix}
\,\middle|\, q;q\,\right) =\frac{(c/a;q)_n}{(c;q)_n}a^n
\end{equation*}
we deduce that the integral is null, and thus the proof is complete.
\end{proof}

By using the previous lemma with $\alpha$ and $\alpha+1$, and
taking into account that $\mathcal{J}_{\alpha,n}(x;q)$ is even or
odd according $n$ is even or odd, respectively, we have that
$\{\mathcal{J}_{\alpha,n}(x;q)\}_{n\ge 0}$ is an orthogonal system
on $L^{2}(\mathbb{R},d\mu_{\alpha,q}(x))$, namely
\begin{equation*}
\int_{-\infty}^{\infty} \mathcal{J}_{\alpha,n}(x;q^{2})
\mathcal{J}_{\alpha,m}(x;q^{2})\,d\mu_{q,\alpha}(x)
= \frac{(q^{2\alpha+2};q^{2})_{\infty}}{(q^{2};q^{2})_{\infty}}
\frac{1}{1-q^{2\alpha+2m+2}} \, \delta_{n,m},
\end{equation*}
for $n,m=0,1,2,\dots$.

To find the functions required in the ingredient (iii) of our method,
we consider
\begin{align}
\mathcal{Q}_{n}^{(\alpha ,\beta )}(t;q^{2})
&= (h_{n,q}^{(\beta ,\alpha )})^{-1}
\, \frac{(t^{2}q^{2};q^{2})_{\infty }}
{(t^{2}q^{2+2\beta};q^{2})_{\infty }}
C_{n}^{(\beta +1/2,\alpha +1/2)}(t;q^{2}),
\label{eq:calqQ} \\
\mathcal{P}_{n}^{(\alpha ,\beta )}(t;q^{2})
&= C_{n}^{(\beta +1/2,\alpha +1/2)}(t;q^{2}),
\label{eq:calqP}
\end{align}
and use the following lemma.

\begin{Lem}
\label{lem:qF-PQ} Let $\alpha, \beta > -1$, $\alpha+\beta > -1$,
and $k=0,1,2,\dots$. Then
\begin{equation}  \label{eq:qF-Q}
\mathcal{F}_{\alpha,q}(\mathcal{J}_{\alpha+\beta,k}(\,\cdot\,;q^{2}))(t)
= \frac{ (-i)^{k} q^{[\frac{k}{2}]
\beta}}{1-q^{2\alpha+2\beta+2k+2}}
\frac{(q^{2\alpha+2\beta+2};q^{2})_{\infty}}{(q^{2};q^{2})_{\infty}}
\mathcal{Q}_{k}^{(\alpha,\beta)}(t;q^{2}),
\end{equation}
for $t\in \{\pm q^{k}\}_{k\in\mathbb{Z}}$, and
\begin{equation}  \label{eq:qF-P}
\mathcal{F}_{\alpha,q}(| \cdot | ^{2\beta}
\mathcal{J}_{\alpha+\beta,k} (\,\cdot\,;q^{2}))(t) =
q^{-[\frac{k}{2}] \beta}(-i)^{k}
\frac{(q^{2\alpha+2};q^{2})_{\infty}} {(q^{2\alpha+2\beta+2
};q^{2})_{\infty}} \mathcal{P}_{k}^{(\alpha,\beta)}(t;q^2),
\end{equation}
for $t\in \{\pm q^k\}_{k\in \mathbb{Z}} \cap [-1,1]$.
\end{Lem}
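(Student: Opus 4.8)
The plan is to exploit the parity of the $q$-Neumann functions together with the even/odd splitting of the kernel $E_\alpha$ in~\eqref{eq:qDunkl}. Since $J_\nu^{(3)}$ has the parity of $\nu$, the function $\mathcal{J}_{\alpha+\beta,k}(x;q^2)$ is even when $k=2n$ and odd when $k=2n+1$. On even functions $\mathcal{F}_{\alpha,q}$ reduces, up to the normalizing ratio of $E_\alpha$, to the $q$-Hankel transform $H_{\alpha,q}$, whose kernel contributes $J_\alpha$, while on odd functions it reduces to $H_{\alpha+1,q}$, contributing $J_{\alpha+1}$ together with the extra factor $it$. Thus, writing $t=q^s$ and collecting the powers of $x$ coming from $d\mu_{q,\alpha}$, from the kernel, and from $\mathcal{J}_{\alpha+\beta,k}$, each of the four cases (even/odd $k$, with or without the weight $|x|^{2\beta}$) is reduced to a single $q$-integral of the shape $\int_0^\infty x^{-\lambda}J_\mu(q^mx;q^2)J_\nu(q^nx;q^2)\,d_qx$, to which~\eqref{eq:qweber2} applies.

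For~\eqref{eq:qF-P} the presence of $|x|^{2\beta}$ forces $\lambda=-\beta$. With $\mu=\alpha$, $\nu=\alpha+\beta+2n+1$ in the even case (respectively $\mu=\alpha+1$, $\nu=\alpha+\beta+2n+2$ in the odd case), the first branch of~\eqref{eq:qweber2} has upper parameter $q^{-2n}$, so it terminates, and one checks that it coincides with the little $q$-Jacobi polynomial $p_n(t^2;q^{2\alpha},q^{2\beta};q^2)$ (respectively $p_n(t^2;q^{2\alpha+2},q^{2\beta};q^2)$). Renormalizing via~\eqref{eq:pnlittlejacobi} turns this into $p_n^{(\alpha,\beta)}(t^2;q^2)$ (respectively $p_n^{(\alpha+1,\beta)}(t^2;q^2)$), which by the definition of $C_k^{(\beta+1/2,\alpha+1/2)}$ is, up to an explicit constant, precisely $\mathcal{P}_k^{(\alpha,\beta)}(t;q^2)$ as in~\eqref{eq:calqP}. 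Gathering the prefactor of~\eqref{eq:qweber2}, the constant in $E_\alpha$, and the normalization~\eqref{eq:pnlittlejacobi} yields~\eqref{eq:qF-P}.

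For~\eqref{eq:qF-Q} there is no weight, so now $\lambda=\beta$ and the first branch of~\eqref{eq:qweber2} is the non-terminating series ${}_2\phi_1(q^{2\alpha+2n+2},q^{-2\beta-2n};q^{2\alpha+2};q^2;q^{2\beta+2}t^2)$ in the even case. Applying Heine's transformation in the symmetric form (with the two upper parameters interchanged) produces the factor $(q^2t^2;q^2)_\infty/(q^{2\beta+2}t^2;q^2)_\infty$, which is exactly the weight appearing in $\mathcal{Q}_k$ in~\eqref{eq:calqQ}, times $(q^{2-2n}t^2;q^2)_n$ times a terminating ${}_2\phi_1$ with parameter $q^{-2n}$; the product of the last two factors is a degree-$n$ polynomial in $t^2$ that one identifies with $p_n^{(\alpha,\beta)}(t^2;q^2)$, hence with $C_{2n}^{(\beta+1/2,\alpha+1/2)}(t;q^2)$. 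The odd case is analogous, with $\mu=\alpha+1$. Comparing with~\eqref{eq:calqQ}, including the explicit value of $h_{k,q}^{(\beta,\alpha)}$, and collecting constants gives~\eqref{eq:qF-Q}.

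The routine but delicate part is the bookkeeping of constants: the prefactors $(1-q)q^{\cdots}$ of~\eqref{eq:qweber2}, the normalizing ratios in $E_\alpha$ and in $d\mu_{q,\alpha}$, the powers $q^{[k/2]\beta}$ and the signs $(-i)^k$, and the factors produced by~\eqref{eq:pnlittlejacobi} and by the definition of $C_k^{(\beta+1/2,\alpha+1/2)}$, must all combine into the stated coefficients. Two points require genuine care. First, the \emph{exceptional cases} noted after~\eqref{eq:qweber2} must be checked, so that the terminating first branch is indeed the valid evaluation in each of the four situations. Second, in~\eqref{eq:qF-Q} one must select the correct Heine transformation so that precisely the little $q$-Jacobi weight $(q^2t^2;q^2)_\infty/(q^{2\beta+2}t^2;q^2)_\infty$ is split off, and then match the residual terminating series to $p_n^{(\alpha,\beta)}$ via a standard reversal of a terminating ${}_2\phi_1$; this last identification, rather than the reduction itself, is where I expect the main obstacle to lie.
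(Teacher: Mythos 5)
Your proposal follows essentially the paper's own route: decompose by parity so that even $k$ reduces to the $q$-Hankel transform with kernel $J_\alpha$ and odd $k$ to $J_{\alpha+1}$ with the extra factor $it$, evaluate the resulting single $q$-integral by \eqref{eq:qweber2} with $\lambda=-\beta$ for \eqref{eq:qF-P} and $\lambda=\beta$ for \eqref{eq:qF-Q}, watch the exceptional cases, and then do the constant bookkeeping. (The paper packages the two integral evaluations as Lemma~\ref{lem:qHTjnab}, with $I_{\pm}$ corresponding to your two choices of $\lambda$; in particular, the restriction $t\in[-1,1]$ in \eqref{eq:qF-P} is exactly the exceptional case of \eqref{eq:qweber2} you mention, and for $I_-$ with $t>1$ and integer $\beta$ the paper extends the formula by continuity in $\beta$.) Your treatment of \eqref{eq:qF-P} is complete in substance: the first branch terminates and is literally $p_n(t^2;q^{2\alpha},q^{2\beta};q^2)$, resp.\ $p_n(t^2;q^{2\alpha+2},q^{2\beta};q^2)$ in the odd case.

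The one genuine gap is the step you yourself flag as ``the main obstacle'' in \eqref{eq:qF-Q}. The Heine transformation you select (with the upper parameters interchanged) turns ${}_{2}\phi_{1}(q^{2\alpha+2n+2},q^{-2\beta-2n};q^{2\alpha+2};q^{2};t^{2}q^{2\beta+2})$ into the weight $(q^{2}t^{2};q^{2})_{\infty}/(q^{2\beta+2}t^{2};q^{2})_{\infty}$ times $(q^{2-2n}t^{2};q^{2})_{n}$ times ${}_{2}\phi_{1}(q^{-2n},t^{2}q^{2\beta+2};t^{2}q^{2-2n};q^{2};q^{2\alpha+2n+2})$, a terminating series whose \emph{parameters depend on $t$}; identifying that degree-$n$ polynomial with $p_{n}^{(\alpha,\beta)}(t^{2};q^{2})$ requires a further transformation of terminating ${}_{2}\phi_{1}$ series, which you do not supply. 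This obstacle is avoidable, and the paper avoids it: apply instead the $q$-Euler transformation \eqref{eq:basictransform} with $a=q^{2\alpha+2n+2}$, $b=q^{-2\beta-2n}$, $c=q^{2\alpha+2}$, $z=t^{2}q^{2\beta+2}$ (base $q^{2}$), for which $abz/c=t^{2}q^{2}$. This produces in one step the same weight times ${}_{2}\phi_{1}(q^{-2n},q^{2\alpha+2\beta+2n+2};q^{2\alpha+2};q^{2};t^{2}q^{2})$, which is by definition $p_{n}(t^{2};q^{2\alpha},q^{2\beta};q^{2})$, so no residual identification is needed; comparison with \eqref{eq:pnlittlejacobi}, the definition of $C_{k}^{(\beta+1/2,\alpha+1/2)}$, the value of $h_{k,q}^{(\beta,\alpha)}$, and \eqref{eq:calqQ} then gives \eqref{eq:qF-Q} exactly as you describe. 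With that substitution your argument coincides with the paper's.
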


The proof of Lemma~\ref{lem:qF-PQ} is contained in
subsection~\ref{sub:qprooflemma}.

\subsection{Main result}

\begin{Thm}
\label{thm:qDunkl} Let $\alpha,\beta > -1$ and $\alpha+\beta>-1$. Then for
each $x\in \{\pm q^k\}_{k\in \mathbb{Z}}$ the following expansion holds in
$L^{2}([-1,1],d\mu_{q,\alpha})$:
\begin{multline}
E_{\alpha}(ixt;q^{2}) \\
= \frac{(q^{2};q^{2})_{\infty}}{(q^{2\alpha+2\beta+2 };q^{2})_{\infty}}
\sum_{n=0}^{\infty} i^{n}
q^{-[\frac{n+1}{2}]\beta} (1-q^{2\alpha+2\beta+2n+2})
\mathcal{J}_{\alpha+\beta,n}(x;q^{2})
C_{n}^{(\beta+1/2,\alpha+1/2)}(t;q^{2}).  \label{eq:JA-qDunkl}
\end{multline}
Moreover, for $f \in PW_{\alpha,q}$, we have the orthogonal
expansion
\begin{equation*}
f(x) = \sum_{n=0}^\infty a_n(f) (1-q^{2\alpha +2\beta+2n +2})
\mathcal{J}_{\alpha+\beta,n}(x;q^2)
\end{equation*}
with
\begin{equation}  \label{eq:cnf-qDunkl}
a_n(f) = \frac{(q^{2};q^{2})_{\infty}} {(q^{2\alpha+2\beta+2
};q^{2})_{\infty}} \int_{\mathbb{R}} f(t)
\mathcal{J}_{\alpha+\beta,n}(t;q^2) \,d\mu_{q,\alpha+\beta}(t).
\end{equation}
Furthermore, the series converges uniformly in compact subsets
of~$\mathbb{R}$.
\end{Thm}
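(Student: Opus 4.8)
The plan is to apply Theorem~\ref{thm:expbilin} with the concrete ingredients assembled in this section, and then identify the resulting abstract objects $S_n$, $P_n$ with the explicit $q$-functions in the statement. Specifically, I would take $\Omega=\mathbb{R}$ with the measure $d\mu_{q,\alpha}$, the kernel $K(x,t)=E_{\alpha}(ixt;q^{2})$, and the integral transform $\mathcal{K}=\mathcal{F}_{\alpha,q}$, whose inversion $\mathcal{F}_{\alpha,q}^{-1}f(y)=\mathcal{F}_{\alpha,q}f(-y)$ and self-adjointness were recorded above. The interval is $I=[-1,1]$, so that $\mathcal{P}=PW_{\alpha,q}$ as defined in subsection~\ref{sub:calP}. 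For the biorthonormal system on $L^{2}(I,d\mu_{q,\alpha})$ I would use $\{\mathcal{P}_{n}^{(\alpha,\beta)}\}$ and $\{\mathcal{Q}_{n}^{(\alpha,\beta)}\}$ from~\eqref{eq:calqP}--\eqref{eq:calqQ}; the orthogonality relations for the generalized little $q$-Gegenbauer polynomials, together with the normalizing constants $h_{n,q}^{(\beta,\alpha)}$, show immediately that these two families are biorthonormal, and completeness follows since the $q$-Gegenbauer polynomials are (up to the weight) a complete polynomial system for the little $q$-Jacobi orthogonality.

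The heart of the argument is to compute the sequences $S_{n}=\widetilde{\mathcal{K}}(\chi_{I}\overline{Q_{n}})$ and $T_{n}=\overline{\mathcal{K}(\chi_{I}P_{n})}$ from ingredient (iii), and here Lemma~\ref{lem:qF-PQ} does all the work. I would read~\eqref{eq:qF-Q} and~\eqref{eq:qF-P} \emph{in reverse}: since $\mathcal{F}_{\alpha,q}$ sends the $q$-Neumann functions $\mathcal{J}_{\alpha+\beta,k}$ (and their $|\cdot|^{2\beta}$-weighted versions) to multiples of $\mathcal{Q}_{k}^{(\alpha,\beta)}$ and $\mathcal{P}_{k}^{(\alpha,\beta)}$ respectively, applying $\mathcal{F}_{\alpha,q}$ once more and using the inversion formula expresses $S_{n}$ and $T_{n}$ as explicit scalar multiples of $\mathcal{J}_{\alpha+\beta,n}(x;q^{2})$. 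Concretely, $\widetilde{\mathcal{K}}(\chi_{I}\overline{\mathcal{Q}_{n}})$ becomes, after inverting the relation~\eqref{eq:qF-Q}, a constant times $\mathcal{J}_{\alpha+\beta,n}(x;q^{2})$; the parity of $\mathcal{J}_{\alpha+\beta,n}$ (even or odd as $n$ is even or odd) and the reality of the polynomials let me handle the complex conjugations and the $(-i)^{n}$ factors cleanly. Matching the constants $q^{-[\frac{n+1}{2}]\beta}$, $(1-q^{2\alpha+2\beta+2n+2})$, and the ratio of $q$-shifted factorials against those in~\eqref{eq:JA-qDunkl} is then a bookkeeping exercise combining the two constants from Lemma~\ref{lem:qF-PQ} with the normalization $h_{n,q}^{(\beta,\alpha)}$.

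With $S_{n}(x)$ identified as the appropriate multiple of $\mathcal{J}_{\alpha+\beta,n}(x;q^{2})$ and $P_{n}(t)=C_{n}^{(\beta+1/2,\alpha+1/2)}(t;q^{2})$, the bilinear expansion~\eqref{eq:expbilin} becomes exactly~\eqref{eq:JA-qDunkl}, and the biorthogonal expansion of $f\in\mathcal{P}=PW_{\alpha,q}$ from Theorem~\ref{thm:expbilin} yields the coefficient formula; rewriting $c_{n}(f)=\int f\overline{T_{n}}\,d\mu_{q,\alpha}$ using the explicit $T_{n}$ and absorbing the weight $|t|^{2\beta}$ into the measure turns $d\mu_{q,\alpha}$ into $d\mu_{q,\alpha+\beta}$, which is precisely~\eqref{eq:cnf-qDunkl}. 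Finally, the uniform convergence on compact sets of $\mathbb{R}$ follows from the last sentence of Theorem~\ref{thm:expbilin}: I would verify that $\|K(x,\cdot)\|_{L^{2}(I,d\mu_{q,\alpha})}=\|E_{\alpha}(ix\cdot;q^{2})\|_{L^{2}([-1,1],d\mu_{q,\alpha})}$ is bounded on compact subsets, which is immediate from the local boundedness of the $q$-Bessel series defining $E_{\alpha}$.

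I expect the main obstacle to be the constant-chasing in the second paragraph: reconciling the two different normalizing factors appearing in~\eqref{eq:qF-Q} and~\eqref{eq:qF-P} (note the $q^{[\frac{k}{2}]\beta}$ versus $q^{-[\frac{k}{2}]\beta}$ and the reciprocal factorial ratios) with the single symmetric constant in~\eqref{eq:JA-qDunkl}, while correctly tracking the factor-of-$q^{[\frac{n+1}{2}]}$ shift hidden inside the argument $xq^{[\frac{n+1}{2}]}$ of the $q$-Bessel function and the $i^{n}$ versus $(-i)^{k}$ phases under conjugation. A subtler point is the legitimacy of inverting Lemma~\ref{lem:qF-PQ} pointwise on the grid $\{\pm q^{k}\}$ and simultaneously as an $L^{2}$ identity; I would justify this using the isometry of $\mathcal{F}_{\alpha,q}$ on $L^{2}(\mathbb{R},d\mu_{q,\alpha})$ together with the orthogonality of the $\mathcal{J}_{\alpha+\beta,n}$ established just before the lemma, so that the abstract machinery of Theorem~\ref{thm:expbilin} applies without circularity.
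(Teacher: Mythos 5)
Your proposal is correct and essentially identical to the paper's own proof, which consists precisely of invoking Theorem~\ref{thm:expbilin} with $\Omega=\mathbb{R}$, $I=[-1,1]$, $K(x,t)=E_{\alpha}(ixt;q^{2})$, $\mathcal{K}=\mathcal{F}_{\alpha,q}$, $\mathcal{P}=PW_{\alpha,q}$, and the biorthogonal pair $P_{n}=\mathcal{P}_{n}^{(\alpha,\beta)}(\,\cdot\,;q^{2})$, $Q_{n}=\mathcal{Q}_{n}^{(\alpha,\beta)}(\,\cdot\,;q^{2})$, and then citing Lemma~\ref{lem:qF-PQ}. The details you spell out---reading Lemma~\ref{lem:qF-PQ} in reverse via the inversion formula to identify $S_{n}$ and $T_{n}$ as explicit multiples of $\mathcal{J}_{\alpha+\beta,n}$, absorbing $|t|^{2\beta}$ into the measure so that $d\mu_{q,\alpha}$ becomes $d\mu_{q,\alpha+\beta}$ in the coefficient formula, and bounding $\Vert E_{\alpha}(ix\,\cdot\,;q^{2})\Vert_{L^{2}([-1,1],d\mu_{q,\alpha})}$ for the uniform convergence---are exactly the steps the paper compresses into ``the result now follows easily from Theorem~\ref{thm:expbilin} and Lemma~\ref{lem:qF-PQ}.''
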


\begin{proof}
We proceed as in the proof of Theorem~2 in \cite{ACVExp} by using the
appropriate modifications. In the biorthogonal setup given in
section~\ref{sec:BBE}, let $\Omega =\mathbb{R}$, $I=[-1,1]$,
the space $L^{2}(I,d\mu )=L^{2}([-1,1],d\mu _{q,\alpha })$,
and the kernel $K(x,t)=E_{\alpha }(ixt;q^{2})$,
so $\mathcal{K}$ becomes~$\mathcal{F}_{\alpha ,q}$,
the $q$-analogue of the Dunkl transform defined in~\eqref{eq:qDu-T}
(and $\widetilde{K}=\mathcal{F}_{\alpha ,q}^{-1}$). Also, consider the
Paley-Wiener space $\mathcal{P}=PW_{\alpha ,q}$ of subsection~\ref{sub:calP}.
Finally, for $N=\mathbb{N}\cup \{0\}$, take the biorthonormal system given by
$P_{n}(t)=\mathcal{P}_{n}^{(\alpha ,\beta )}(t;q^{2})$
and $Q_{n}(t)=\mathcal{Q}_{n}^{(\alpha ,\beta )}(t;q^{2})$
as in~\eqref{eq:calqP} and~\eqref{eq:calqQ}. The result now follows
easily from Theorem~\ref{thm:expbilin} and Lemma~\ref{lem:qF-PQ}.
\end{proof}

\begin{Rem}
Taking the even parts in the identity \eqref{eq:JA-qDunkl}, we
deduced the following expansion for the kernel of the $q$-Hankel
transform:
\begin{multline*}
\frac{J_\alpha(xt;q^2)}{(xt)^\alpha}
= \frac{(q^{2\alpha+2};q^2)_\infty}{(q^{2\alpha+2\beta+2};q^2)_\infty}\\
\times \sum_{n=0}^\infty
q^{-n\beta}(1-q^{2\alpha+2\beta+4n+2})
\frac{(q^{2\alpha+2\beta+2};q^2)_n}{(q^{2\alpha+2};q^2)_n}
\frac{J_{\alpha+\beta+2n+1}(xq^n;q^2)}{x^{\alpha+\beta+1}}
p_n^{(\alpha,\beta)}(t^2;q^2),
\end{multline*}
valid for $\alpha$ and $\beta$ that satisfy $\alpha,\beta > -1$
and $\alpha+\beta>-1$.
Moreover, the functions belonging to the $q$-Hankel analogue of
the Paley-Wiener space, which is the domain of the sampling
theorem in~\cite{Abr-qs}, can be spanned by systems of $q$-Neumann
functions.
\end{Rem}

\subsection{Proof of formula \protect\eqref{eq:qlinearplanewave}}

Setting $\alpha =-\frac{1}{2}$ and replacing $\beta $ by
$\beta -\frac{1}{2}$, in~\eqref{eq:cnf-qDunkl} we obtain
the expansion for the $q$-exponential
function studied in~\cite{Rubin} in terms of the little $q$-Gegenbauer
polynomials defined in~\eqref{eq:qultra}:
\begin{equation*}
e(ixt;q^{2})
= \frac{(q^{2};q^{2})_{\infty }}{(q^{2\beta };q^{2})_{\infty }}
\,x^{-\beta } \sum_{n=0}^{\infty } i^{n}
q^{-[\frac{n+1}{2}](\beta-\frac{1}{2})} (1-q^{2\beta+2n })
J_{\beta +n}(xq^{[\frac{n+1}{2}]};q^{2})C_{n}^{\beta }(t;q^{2}).
\end{equation*}

\section{Technical lemmas}

In this section we present the calculations which provide us with the
$q$-analogues of the results in \cite[section~5]{ACVExp}.

\subsection{Some integrals involving $q$-Bessel functions}
\label{sub:qintBessel}

We will use the transformation
\begin{equation}  \label{eq:basictransform}
{_{2}\phi_{1}} \left(\,
\begin{matrix}
a,b \\
c
\end{matrix}
\,\middle|\, q;z \,\right)
= \frac{(abz/c;q)_{\infty}}{(z;q)_{\infty}}
\,{_{2}\phi_{1}} \left(\,
\begin{matrix}
c/a,c/b \\
c
\end{matrix}
\,\middle|\, q;abz/c \,\right);
\end{equation}
this formula appears in~\cite[formula (12.5.3)]{Ism}
subject to the conditions $|z|<1$ and $|abz| < |c|$,
but these restrictions on $z$ and on the parameters can be eliminated
because, by analytic continuation, the identity holds on $\mathbb{C}$
for all parameters as an identity of meromorphic functions
(se also \cite[p.~117]{GR}).
We also make repeated use of the obvious identity
$(a;q)_{\infty}=(a;q)_{n}(aq^{n};q)_{\infty}$.

\begin{Lem}
\label{lem:qHTjnab} For $\alpha,\beta >-1$ with $\alpha+\beta>-1$, and
$n=0,1,2,\dots$, let us define
\begin{equation*}
I_{-}(\alpha,\beta,n)(t,q) = \frac{t^{-\alpha }}{1-q}
\int_{0}^{\infty} x^{-\beta} J_{\alpha}(xt;q^{2})
J_{\alpha+\beta+2n+1}(q^{n}x;q^{2}) \, d_{q}x
\end{equation*}
and
\begin{equation*}
I_{+}(\alpha,\beta,n)(t,q)
= \frac{t^{-\alpha }}{1-q} \int_{0}^{\infty}
x^{\beta}J_{\alpha }(xt;q^{2})
J_{\alpha+\beta+2n+1}(q^{n}x;q^{2}) \, d_{q}x.
\end{equation*}
Then, we have
\begin{equation}  \label{eq:qHTjnab}
I_{-}(\alpha,\beta,n)(t,q) = q^{n\beta } \,
\frac{(q^{2\beta+2n+2};q^{2})_{\infty}}{(q^{2n+2};q^{2})_{\infty}}
\frac{(t^{2}q^{2};q^{2})_{\infty}}{(t^{2}q^{2\beta+2};q^{2})_{\infty}}
\, p_{n}^{(\alpha,\beta)}(t^{2};q^{2}),
\quad t\in \{q^{m}\}_{m\in \mathbb{Z}},
\end{equation}
and
\begin{equation}  \label{eq:qHTbjnab}
I_{+}(\alpha,\beta,n)(t,q) = q^{-n\beta } \,
\frac{(q^{2\alpha+n+2};q^{2})_{\infty}}
{(q^{2\alpha+2\beta+2n+2 };q^{2})_{\infty}}
\, p_{n}^{(\alpha,\beta)}(t^{2};q^{2}), \quad
t\in \{q^m\}_{m\in \mathbb{Z}} \cap (0,1].
\end{equation}
\end{Lem}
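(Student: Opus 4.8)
The plan is to recognize both $I_-$ and $I_+$ as special instances of the master integral evaluation~\eqref{eq:qweber2}, and then to massage the resulting basic hypergeometric series into a little $q$-Jacobi polynomial. Writing $t=q^m$ with $m\in\mathbb{Z}$, in both cases I would set $\mu=\alpha$, $\nu=\alpha+\beta+2n+1$, and match the integer exponents of $q$ attached to $x$ as $m$ (from $J_\alpha(xt;q^2)=J_\alpha(q^mx;q^2)$) and $n$ (from $J_{\alpha+\beta+2n+1}(q^nx;q^2)$). The only difference between the two integrals is the power of $x$: for $I_-$ one has $x^{-\beta}$, so $\lambda=\beta$, while for $I_+$ one has $x^{\beta}$, so $\lambda=-\beta$. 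In either case the hypothesis $\Re\lambda<\Re(\mu+\nu+1)$ required by~\eqref{eq:qweber2} reduces to one of $\alpha>-n-1$ or $\alpha+\beta>-n-1$, both guaranteed by $\alpha,\beta>-1$ and $\alpha+\beta>-1$.

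For $I_+$ ($\lambda=-\beta$) I would use the first branch of~\eqref{eq:qweber2}. A direct substitution shows that its ${}_2\phi_1$ has upper parameters $q^{2\alpha+2\beta+2n+2},q^{-2n}$, lower parameter $q^{2\alpha+2}$ and argument $q^2t^2$; since the entry $q^{-2n}$ makes the series terminate, this is already exactly the unnormalized little $q$-Jacobi polynomial $p_n(t^2;q^{2\alpha},q^{2\beta};q^2)$. It then remains to pass to the normalization~\eqref{eq:pnlittlejacobi} and to collect the prefactor, absorbing $t^{-\alpha}=q^{-m\alpha}$ together with the power $q^{n(\lambda-1)+(m-n)\mu}$; repeated use of $(a;q^2)_\infty=(a;q^2)_n(aq^{2n};q^2)_\infty$ then collapses everything to the factor in~\eqref{eq:qHTbjnab} and the clean power $q^{-n\beta}$.

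For $I_-$ ($\lambda=\beta$) the same first branch yields instead a ${}_2\phi_1$ with upper parameters $q^{2\alpha+2n+2},q^{-2\beta-2n}$, lower $q^{2\alpha+2}$ and argument $t^2q^{2\beta+2}$, which does not terminate. Here the key extra step is to apply the transformation~\eqref{eq:basictransform} (with $q\to q^2$, $a=q^{2\alpha+2n+2}$, $b=q^{-2\beta-2n}$, $c=q^{2\alpha+2}$, $z=t^2q^{2\beta+2}$): this sends the upper parameters to $q^{-2n},q^{2\alpha+2\beta+2n+2}$ and the argument to $q^2t^2$, turning the series into the same terminating little $q$-Jacobi polynomial while producing precisely the factor $(q^2t^2;q^2)_\infty/(t^2q^{2\beta+2};q^2)_\infty$ appearing in~\eqref{eq:qHTjnab}. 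The remaining bookkeeping is identical to the $I_+$ case and yields the prefactor $q^{n\beta}(q^{2\beta+2n+2};q^2)_\infty/(q^{2n+2};q^2)_\infty$.

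The main obstacle is not the algebra but the correct handling of the exceptional cases flagged after~\eqref{eq:qweber2}: one must confirm that the first branch is the representation that actually equals the integral. For these parameters the second branch degenerates, since its prefactor contains $(q^{1+\lambda+\mu-\nu};q^2)_\infty=(q^{-2n};q^2)_\infty$, which vanishes, so it cannot represent the nonzero integral; this confirms that the first branch is the valid one. The same analysis explains the domains: for $I_-$ the transformed series is a polynomial, so~\eqref{eq:basictransform}, valid as a meromorphic identity in the parameters, extends the formula to every $t=q^m$, whereas for $I_+$ the second-branch condition becomes operative once $t>1$, which is exactly why~\eqref{eq:qHTbjnab} is stated only for $t\in\{q^m\}\cap(0,1]$.
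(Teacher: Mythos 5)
Your substitutions into \eqref{eq:qweber2} ($\mu=\alpha$, $\nu=\alpha+\beta+2n+1$, $\lambda=\beta$ for $I_-$ and $\lambda=-\beta$ for $I_+$), the use of the first branch, and the application of \eqref{eq:basictransform} to convert the non-terminating series of the $I_-$ case into the terminating little $q$-Jacobi polynomial are exactly the paper's computation, and that algebra (including the convergence check $\Re\lambda<\Re(\mu+\nu+1)$) is correct. The gap is in the step you yourself single out as the main obstacle. Your elimination argument --- ``the second branch has prefactor containing $(q^{-2n};q^2)_\infty=0$, so it cannot equal the nonzero integral, hence the first branch is valid'' --- fails on three counts. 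First, it does not even apply to $I_+$: with $\lambda=-\beta$ one has $1+\lambda+\mu-\nu=-2\beta-2n$, so the relevant factor is $(q^{-2\beta-2n};q^2)_\infty$, which is nonzero for generic $\beta$. Second, where the prefactor does vanish (the $I_-$ case), the accompanying ${}_2\phi_1$ of the second branch has argument $q^{-2m}$, which diverges when $t=q^m\le 1$; the product is then a $0\cdot\infty$ expression to be read by analytic continuation, so nothing follows from the vanishing prefactor. Third, and decisively, in the genuine exceptional case for $I_-$ --- $\beta$ a non-negative integer and $m+\beta+1\le 0$, i.e.\ $t\ge q^{-\beta-1}>1$ --- the situation is the exact opposite of what you assert: there the integral is \emph{zero} (consistently with the factor $(t^2q^2;q^2)_\infty$ in \eqref{eq:qHTjnab}), it equals the degenerate second branch, and it does \emph{not} equal the first branch. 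A concrete check: for $\alpha=\beta=0$, $n=0$, $t=q^m$ with $m\le-1$, the first branch's ${}_2\phi_1$ has upper parameter $q^0=1$, so the series equals $1$ and the first branch is nonzero, while $I_-$ vanishes. Relatedly, your appeal to meromorphic continuation extends the identity \eqref{eq:basictransform} between the two ${}_2\phi_1$'s, but not the identity between the \emph{integral} and the first branch of \eqref{eq:qweber2}; it is the latter that breaks in the exceptional case (and indeed \eqref{eq:basictransform} itself degenerates to $0/0$ at those parameter values).

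So your proof of \eqref{eq:qHTjnab} is only valid when no exceptional case occurs, namely for $t\le 1$, or for $t>1$ with $\beta$ non-integer. The paper closes the remaining case ($t>1$, $\beta$ integer) not by your elimination argument but by continuity of $I_-(\alpha,\beta,n)(t,q)$ in $\beta$: the identity holds for the dense set of non-integer $\beta$, and both sides depend continuously on $\beta$. Your treatment of $I_+$, by contrast, coincides with the paper's: the exceptional condition for the first branch reduces to $m+1\le 0$ and $-n\le 0$, which is operative precisely when $t>1$, and this is why \eqref{eq:qHTbjnab} is asserted only for $t\in\{q^m\}_{m\in\mathbb{Z}}\cap(0,1]$.
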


\begin{Rem}
Note that $I_{-}(\alpha,\beta,n)(t,q)=0$ for $t>1$. This is due to
the factor $(t^2q^2;q^2)_\infty$ involved in the formula.
\end{Rem}

\begin{Rem}
The identity \eqref{eq:qHTjnab} can be interpreted in terms of the
$q$-Hankel transform in the way
\begin{equation*}
H_{\alpha,q}(\mathcal{J}_{\alpha+\beta,2n}(\cdot;q^2))(t)
= q^{n\beta } \,
\frac{(q^{2\beta+2n+2};q^{2})_{\infty}}{(q^{2n+2};q^{2})_{\infty}}
\frac{(t^{2}q^{2};q^{2})_{\infty}}{(t^{2}q^{2\beta+2};q^{2})_{\infty}}
\, p_{n}^{(\alpha,\beta)}(t^{2};q^{2})
\end{equation*}
and, as a consequence of the inversion formula, it is also
verified that
\begin{equation*}
H_{\alpha,q}\left(q^{n\beta } \,
\frac{(q^{2\beta+2n+2};q^{2})_{\infty}}{(q^{2n+2};q^{2})_{\infty}}
\frac{((\cdot)^{2}q^{2};q^{2})_{\infty}}
{((\cdot)^{2}q^{2\beta+2};q^{2})_{\infty}}
\, p_{n}^{(\alpha,\beta)}((\cdot)^{2};q^{2})\right)(t)
= \mathcal{J}_{\alpha+\beta,2n}(t;q^2).
\end{equation*}
\end{Rem}

\begin{proof}[Proof of Lemma~\ref{lem:qHTjnab}]
We start evaluating $I_{-}(\alpha,\beta,n)(t,q)$ for $t\in
\{q^{m}\}_{m\in \mathbb{Z}}$. To this end, we take in
\eqref{eq:qweber2} $q^{m}=t$, $\mu =\alpha$,
$\nu=\alpha+\beta+2n+1$ and $\lambda =\beta$. For $t\le 1$ or
$t>1$ and $\beta$ non-integer we can use the first part of the
right-hand side of \eqref{eq:qweber2} to compute
$I_{-}(\alpha,\beta,n)(t,q)$ because we are not in the exceptional
situations. Then, in these cases,
\begin{multline*}
I_{-}(\alpha,\beta,n)(t,q) \\
= q^{n(\beta -\alpha -1)}
\frac{(q^{2\beta+2n+2},q^{2\alpha+2};q^{2})_{\infty}}
{(q^{2\alpha+2n+2 },q^{2};q^{2})_{\infty}} {\,{}_{2}\phi_{1}}
\left(\,
\begin{matrix}
q^{2\alpha+2n+2},q^{-2n-2\beta} \\
q^{2\alpha+2}
\end{matrix}
\,\middle|\, q^{2};t^{2}q^{2\beta+2} \,\right).
\end{multline*}
Moreover, the previous identity can be extended to all $\beta$ if
$t>1$ by continuity of the integral $I_{-}(\alpha,\beta,n)(t,q)$.
Now, applying formula~\eqref{eq:basictransform} and the definition
of $p_n^{(\alpha,\beta)}$ in terms of the little $q$-Jacobi
polynomials, we have
\begin{multline*}
I_{-}(\alpha,\beta,n)(t,q) \\
\begin{aligned}
&= q^{n(\beta-\alpha-1)} \,
\frac{(q^{2\beta+2n+2},q^{2\alpha+2};q^{2})_{\infty}}
{(q^{2\alpha+2n+2},q^{2};q^{2})_{\infty}} \\*
&\kern20pt \times
\frac{(t^{2}q^{2};q^{2})_{\infty}}{(t^{2}q^{2\beta+2};q^{2})_{\infty}}
\,{{}_{2}\phi_{1}} \left(\,
\begin{matrix}
q^{-2n},q^{2\alpha+2\beta+2n+2}
\\ q^{2\alpha+2}
\end{matrix}
\,\middle|\, q^{2};t^{2}q^{2}
\,\right) \\
&= q^{n(\beta -\alpha -1)} \,
\frac{(q^{2\beta+2n+2},q^{2\alpha+2};q^{2})_{\infty}}
{(q^{2\alpha +2n+2},q^{2};q^{2})_{\infty}}
\frac{(t^{2}q^{2};q^{2})_{\infty}}
{(t^{2}q^{2\beta+2};q^{2})_{\infty}} \,
p_{n}(t^{2};q^{2\alpha };q^{2\beta};q^{2}) \\
&= q^{n\beta} \, \frac{(q^{2\beta+2n+2};q^{2})_{\infty}}
{(q^{2n+2};q^{2})_{\infty}} \frac{(t^{2}q^{2};q^{2})_{\infty}}
{(t^{2}q^{2\beta+2};q^{2})_{\infty}} \,
p_{n}^{(\alpha,\beta)}(t^{2};q^{2}),
\end{aligned}
\end{multline*}
and the proof of~\eqref{eq:qHTjnab} is completed.

To prove the second part of the lemma, for $t=q^m\in(0,1]$ we
evaluate the integral by considering the first part of the
right-hand side of~\eqref{eq:qweber2} and choosing the parameters
$\mu =\alpha$, $\nu =\alpha+\beta+2n+1$ and $\lambda =-\beta$.
(The first part of the right-hand side of~\eqref{eq:qweber2} cannot
be used for $t>1$, because we are then in the exceptional
case. An expression for $I_{+}(\alpha,\beta,n)(t,q)$ valid for all
$t$ could be obtained with the second part of the right-hand side
of~\eqref{eq:qweber2}.) This results in
\begin{multline*}
\begin{aligned}
I_{+}(\alpha,\beta,n)(t,q) &= q^{-n(\beta+\alpha+1)}
\frac{(q^{2n+2},q^{2\alpha+2};q^{2})_{\infty}}
{(q^{2\alpha+2\beta+2n+2 },q^{2};q^{2})_{\infty}} \\*
&\kern20pt
\times{{}_{2}\phi_{1}} \left(\,
\begin{matrix} q^{-2n},q^{2\alpha+2\beta+2n+2} \\
q^{2\alpha+2} \end{matrix}
\,\middle|\, q^{2};t^{2}q^{2} \,\right) \\
&= q^{-n(\beta+\alpha+1)}
\frac{(q^{2n+2},q^{2\alpha+2};q^{2})_{\infty}}
{(q^{2\alpha+2\beta+2n+2},q^{2};q^{2})_{\infty}}
\, p_{n}(t^{2};q^{2\alpha };q^{2\beta };q^{2}) \\
&= q^{-n\beta}\frac{(q^{2\alpha+n+2};q^{2})_{\infty}}
{(q^{2\alpha+2\beta+2n+2};q^{2})_{\infty}} \,
p_{n}^{(\alpha,\beta)}(t^{2};q^{2}).
\end{aligned}
\end{multline*}
In this manner, we have proved~\eqref{eq:qHTbjnab} and the proof of the
lemma is finished.
\end{proof}

\subsection{Proof of Lemma~\protect\ref{lem:qF-PQ}}
\label{sub:qprooflemma}

Let us analyze the case $k=2n$ for \eqref{eq:qF-Q}. By decomposing on even
and odd functions we can write
\begin{equation}  \label{eq:FJqfirst}
\mathcal{F}_{\alpha,q}(\mathcal{J}_{\alpha+\beta,2n}(\,\cdot\,;q^{2}))(t)
= \frac{1}{1-q} \int_{0}^{\infty}
\frac{J_{\alpha+\beta+2n+1}(q^{n}x;q^{2})}{x^{\alpha+\beta+1}}
\frac{J_{\alpha}(xt;q^{2})}{(xt)^{\alpha}} \, x^{2\alpha+1} \, d_{q}x.
\end{equation}
Then, for $t>0$, $\alpha,\beta >-1$, and $\alpha+\beta >-1$, by
using~\eqref{eq:qHTjnab}, it is verified that
\begin{multline*}
\begin{aligned}
\mathcal{F}_{\alpha,q}(\mathcal{J}_{\alpha+\beta,2n}(\,\cdot\,;q^{2}))(t)
&= q^{n\beta } \,
\frac{(q^{2\beta+2n+2};q^{2})_{\infty}}{(q^{2n+2};q^{2})_{\infty}}
\frac{(t^{2}q^{2};q^{2})_{\infty}}{(t^{2}q^{2\beta+2};q^{2})_{\infty}}
\,
p_{n}^{(\alpha,\beta)}(t^{2};q^{2}) \\
&= (-1)^{n}q^{n\beta} \,
\frac{(q^{2\alpha+2};q^{2})_{n}}{(q^{2\alpha+2\beta+2};q^{2})_{n}}
\frac{(q^{2\beta+2n+2};q^{2})_{\infty}}
{(q^{2n+2};q^{2})_{\infty}} \\*
&\kern20pt \times\frac{(t^{2}q^{2};q^{2})_{\infty}}
{(t^{2}q^{2\beta+2};q^{2})_{\infty}}
C_{2n}^{(\beta+1/2,\alpha+1/2)}(t;q^{2}) \\
&= \frac{(-1)^{n}q^{n\beta }}{1-q^{2\alpha+2\beta+4n+2}}
\frac{(q^{2\alpha+2\beta+2};q^{2})_{\infty}}
{(q^{2};q^{2})_{\infty}}
\mathcal{Q}_{2n}^{(\alpha,\beta)}(t;q^{2}) .
\end{aligned}
\end{multline*}
For $t<0$, let us make in \eqref{eq:FJqfirst} the change $t_{1}=-t$, use the
evenness of the function $J_{\alpha}(z)/z^{\alpha}$, proceed as in the case
$t>0$, and undo the change. Then, for $k=2n$, we get
\begin{equation*}
\mathcal{F}_{\alpha,q}(\mathcal{J}_{\alpha+\beta,k}(\,\cdot\,;q^{2}))(t)
= \frac{(-i)^{k}q^{\frac{k}{2}\beta }}{1-q^{2\alpha+2\beta+2k+2}}
\frac{(q^{2\alpha+2\beta+2};q^{2})_{\infty}}{(q^{2};q^{2})_{\infty}}
\mathcal{Q}_{k}^{(\alpha,\beta)}(t;q^{2}) \chi_{[-1,1]}(t).
\end{equation*}
The case $k=2n+1$ works in a similar way.

Now, we are going to prove \eqref{eq:qF-P}. Again let us analyze the case
$k=2n$. By decomposing on even and odd functions we can write
\begin{equation*}
\mathcal{F}_{\alpha,q}(|\cdot|^{2\beta }
\mathcal{J}_{\alpha+\beta,2n}(\,\cdot\,;q^{2}))(t)
= \frac{1}{1-q} \int_{0}^{\infty} x^{2\beta }
\frac{J_{\alpha+\beta+2n+1}(q^{n}x;q^{2})} {x^{\alpha+\beta+1}}
\frac{J_{\alpha}(xt;q^{2})}{(xt)^{\alpha}}x^{2\alpha+1} \, d_{q}x.
\end{equation*}
Then, if $0<t<1$, we can use~\eqref{eq:qHTbjnab} to obtain
\begin{align*}
\mathcal{F}_{\alpha,q}(|\cdot| ^{2\beta }
\mathcal{J}_{\alpha+\beta,2n}(\,\cdot\,;q^{2}))(t) &= q^{-n\beta }
\frac{(q^{2\alpha+n+2};q^{2})_{\infty}}
{(q^{2\alpha+2\beta+2n+2};q^{2})_{\infty}}
\, p_{n}^{(\alpha,\beta)}(x^{2};q^{2}) \\
&= (-1)^n q^{-n\beta }\frac{(q^{2\alpha+2};q^{2})_{\infty}}
{(q^{2\alpha+2\beta+2 };q^{2})_{\infty}}
C_{2n}^{(\beta+1/2,\alpha+1/2)}(t;q^{2}) \\
&= (-1)^n q^{-n\beta }\frac{(q^{2\alpha+2};q^{2})_{\infty}}
{(q^{2\alpha+2\beta+2 };q^{2})_{\infty}}
\mathcal{P}_{2n}^{(\alpha,\beta)}(t;q^{2}).
\end{align*}
For $t<0$ we proceed as in the previous identity. Then, for $k=2n$, we get
\begin{equation*}
\mathcal{F}_{\alpha,q}(|\cdot|^{2\beta }
\mathcal{J}_{\alpha+\beta,k}(\,\cdot\,;q^{2}))(t)
= q^{-\frac{k}{2}\beta} (-i)^{k}
\frac{(q^{2\alpha+2};q^{2})_{\infty}}
{(q^{2\alpha+2\beta+2};q^{2})_{\infty}}
\mathcal{P}_{k}^{(\alpha,\beta)}(t;q^2).
\end{equation*}
The case $k=2n+1$ can be checked with the same arguments.

\subsection*{Acknowledgements}
We thank the referees their
extremely careful reading of the previous versions of
this paper and their useful suggestions,
that have allowed to correct some points and to
considerably improve the final version.
In particular, the explanations of one of the
referees about \cite{KoorSw} and \cite{KoorSw-corr}
have been very useful.



\end{document}